\documentclass[journal]{IEEEtran}
\newtheorem{theorem}{Theorem}
\newtheorem{lemma}{Lemma}

\newtheorem{problem}{Problem}
\newtheorem{proof}{Proof}
\newtheorem{assumption}{Assumption}

\ifCLASSINFOpdf
\else
\fi
\usepackage{amsmath,amssymb,amsfonts, algorithm}
\usepackage{algpseudocode}
\usepackage{graphicx}
\usepackage{enumitem}   

\begin{document}
%
\title{Certified Feedforward Tracking for Unknown Nonlinear Systems via Invertible Neural Networks}
%
%
%
\author{Berk Altiner, Rajasree Sarkar, Arunava Banerjee, 
Zongxuan Sun, and Kenneth Kim%
\thanks{Berk Altiner, Rajasree Sarkar, Arunava Banerjee, and Zongxuan Sun
are with the Department of Mechanical Engineering, University of
Minnesota--Twin Cities, Minneapolis, MN 55455, USA
(e-mail: \{altin009, sarka122, abanerje, zsun\}@umn.edu).}%
\thanks{Kenneth Kim is with DEVCOM Army Research Laboratory,
Aberdeen Proving Ground, Aberdeen, MD 21005, USA
(e-mail: kenneth.s.kim11.civ@army.mil).}%
}
\maketitle

\begin{abstract}
In this paper, we address the certification of data-driven feedforward control for periodic tracking of unknown nonlinear systems under partial state measurements. To this end, we adopt an invertible neural network (INN) as a surrogate for the unknown system. This choice allows us to bypass solving a nonconvex inversion problem, eliminating the associated inversion errors and reducing tracking error certification to a surrogate modeling problem. We then apply conformal prediction to provide finite-sample probabilistic guarantees on the surrogate modeling error which, through the derived tracking error bound, yield marginal certificates on feedforward tracking error. Finally, we demonstrate the approach on a DC-motor-driven mechanical load with nonlinear friction.
\end{abstract}


%
\IEEEpeerreviewmaketitle

\section{Introduction}
Feedforward controllers play a central role in achieving high-precision motion control and are classically designed by inverting a model of the plant \cite{boerlage2003model}, \cite{jamaludin2009friction}. The performance of such controllers is fundamentally limited by model fidelity \cite{devasia2002should}. However, obtaining an accurate model of a complex nonlinear system requires both expert knowledge of the underlying physics and substantial effort in parameter identification, and is often difficult in practice. To circumvent this difficulty, data-driven modeling and control have emerged as a viable alternative, enabling feedforward design for unknown or partially unknown dynamics \cite{xu2024robust}, \cite{kon2022physics}. 

In a data-driven setting, neural networks have become a popular choice for surrogate modeling due to their approximation capacity~\cite{zhang2007inverse, 
wanigasekara2019neural, aarnoudse2021control, aarnoudse2024control}. Although this flexibility is desirable, the risk of overparameterization 
motivates the incorporation of physical prior knowledge into the model structure. Physics-guided neural networks (PGNNs) have received particular attention in this context. In this line, a PGNN in an NNARX structure that learns an inverse model by incorporating prior knowledge of nonlinear friction for linear motor control is proposed in \cite{bolderman2021physics}. In a similar spirit, \cite{fan2023physics} combines a physics-based feedforward with a PGNN-based component for hybrid stepper motor control. PGNN-based feedforward under noisy measurements is further explored in \cite{bolderman2022physics}, while \cite{bolderman2023data} investigates the impact of identification error on tracking error and proposes a control-oriented identification approach. Recently, \cite{lin2026physics} proposes gated recurrent units to learn system inverses, addressing the limitations of NARX-type input-output formulations that arise from latent-state dependency. 
An alternative route for partially measured nonlinear systems is taken in \cite{banerjee2025data}, which develops a data-driven input-output representation and inversion framework for systems exhibiting repetitive motion. This formulation is extended in
\cite{altiner2026datadriven} through a frequency-domain representation that
addresses the dimensionality limitations of the time-domain approach. We adopt
this model inversion framework as the basis of the present work.

Across these approaches, data-driven feedforward design follows a common
two-stage structure: model generation, in which a surrogate of the plant is learned from input-output data, and inverse computation, in which the learned surrogate is inverted through an optimization to obtain the feedforward input. Each stage introduces a distinct source of error that propagates to the realized tracking error. The first is the modeling error between the learned surrogate and the true plant, which depends on the model architecture and the number of data samples. The second is the inversion error introduced when computing the feedforward input from the surrogate, which is typically obtained by solving a nonconvex optimization problem due to the nonlinear nature of the surrogate. While the modeling error can be bounded via statistical tools applied to the surrogate's predictions at unseen inputs, the inversion error is less amenable to certification, as the geometry of complex nonconvex landscapes and exponential scale in input dimension pose a formidable challenge in bounding the gap between a numerically computed inverse and the globally optimal solution, and this gap propagates directly into the tracking error.

To resolve this difficulty, we propose an invertible neural network surrogate
in the frequency domain. Architectural invertibility removes the inversion
error by construction: the inverse map is defined by the model itself and is
evaluated by a contractive fixed-point iteration, whose error decays
geometrically in the number of iterations. Consequently, the feedforward
tracking error is governed by the surrogate modeling error alone, and its
certification reduces to bounding the modeling error at an unseen input,
a problem for which conformal prediction provides distribution-free,
finite-sample guarantees.

The contributions of the paper are summarized as follows:
\begin{itemize}
    \item By adopting an invertible neural network architecture, we bypass solving a nonconvex inversion problem and reduce feedforward control design to a data-driven modeling problem.
    \item Using conformal prediction, we provide probabilistic bounds with finite samples that characterize the feedforward tracking error in terms of the Lipschitz constants of the true map and the surrogate model's inverse, and the modeling error.
    \item We validate the approach and the derived bounds on a mechanical load with nonlinear friction as a numerical example.
\end{itemize}

To the best of our knowledge, architectural invertibility has not previously
been exploited to obtain certified feedforward tracking bounds for unknown
nonlinear systems. Among the works discussed above, the closest is
\cite{bolderman2023data}, which derives a quantitative relation between
tracking error and identification error and designs the identification cost
function to minimize the resulting tracking error. The present work differs in
that the tracking error bound of \cite{bolderman2023data} retains an inversion
error term arising from the optimization-based inverse, whereas the proposed
architecture eliminates this term entirely.


The rest of the paper is organized as follows. Section \ref{sec: sec2} introduces the problem definition. Section \ref{sec: sec3} describes the data-driven feedforward control design framework. The main results are presented in Section \ref{sec: sec4}. Numerical results and conclusion are discussed in Section \ref{sec: sec5} and \ref{sec: sec6}.

\section{Problem Definition}
\label{sec: sec2}
Consider the unknown nonlinear discrete-time system
\begin{subequations}\label{eq: sys_des}
    \begin{equation}\label{system}
        x_{k+1} = f_s(x_k,u_k)
    \end{equation}
    \begin{equation}\label{output}
        y_k = g_s(x_k)
    \end{equation}   
\end{subequations}
where $k\in \mathbb{N}$, $x_k \in \mathbb{R}^{n_x}$ represents the system state vector, $u_k\in \mathbb{R}^{n_u}$ denotes the control input, and $y_k\in \mathbb{R}^{n_y}$ is the system output. The maps $f_s:\mathbb{R}^{n_x}\times \mathbb{R}^{n_u} \rightarrow \mathbb{R}^{n_x}$ and $g_s:\mathbb{R}^{n_x} \rightarrow \mathbb{R}^{n_y}$ define the unknown state transition and output maps, and are assumed to be Lipschitz continuous.

In this context, we assume that both $f_s$ and $g_s$ are entirely unknown, and only partial state measurements are accessible. The available information for the system \eqref{eq: sys_des} is restricted to a collected dataset, consisting of input-output trajectory pairs:
\begin{equation}\label{eq:data}
\mathcal{D} = \{(U_i, Y_i)\}_{i=1}^N, 
\end{equation}
where $U_i\in \mathbb{R}^{n_u h}$ and $Y_i\in \mathbb{R}^{n_y h}$ denote the recorded input and output trajectories over a finite horizon length $h$.

The control objective is to design a data-driven feedforward controller that enables the unknown system to track a specified periodic reference trajectory. This objective is formalized as follows.

\begin{problem}\label{prob: tracking}
Given the unknown nonlinear dynamical system \eqref{eq: sys_des} under partial state measurement and the dataset $\mathcal{D}$ in \eqref{eq:data}, design a feedforward control $u$ such that the system output $y_k$ tracks a given periodic reference trajectory $y_d$ within a bounded error $\varepsilon$, satisfying:
\begin{equation}
    \left\rVert y_k - y_d \right\rVert \leq \varepsilon
\end{equation}
\end{problem}

Data-driven system representations and corresponding model inversion strategies for addressing Problem~\ref{prob: tracking} have been proposed in the time domain~\cite{banerjee2025data} and in the frequency domain~\cite{altiner2026datadriven}. However, a key limitation in these existing frameworks is the lack of a rigorous theoretical characterization of the tracking error $\varepsilon$. Specifically, which critical factors propagate to the open-loop tracking performance remains unaddressed. 

To bridge this gap, this paper investigates the certification for data-driven feedforward control. We demonstrate that by adopting an INN surrogate architecture, the tracking error bound in the frequency domain can be effectively quantified and analyzed through the lens of conformal prediction.

\section{Methodology} 
\label{sec: sec3}
\subsection{Model generation}
Consider the general class of nonlinear systems described in \eqref{eq: sys_des} under partial state measurement, which is a common scenario in practice due to sensing limitations, cost considerations, and physical constraints. Since the system is partially measured, we exploit the history of input-output data to represent the underlying system mapping. Moreover, since the target trajectory is periodic, we adopt the modeling approach of~\cite{banerjee2025data}, which is suited for periodic dynamical systems. Specifically, we utilize the recursive structure of equation~\eqref{eq: sys_des} over the time horizon $h$,

\begin{equation}\label{eq:recursive}
    \begin{bmatrix}
        y_k\\
        y_{k+1}\\
        y_{k+2}\\
        \vdots\\
        y_{k+h}
    \end{bmatrix}= \begin{bmatrix}
        g_s(x_k)\\
        g_s(f_s(x_k,u_k))\\
        g_s(f_s(f_s(x_k,u_k),u_{k+1}))\\
        \vdots\\
        g_s(f_s(\dots f_s(f_s(x_k,u_k),u_{k+1})\dots,u_{k+h-1}))
    \end{bmatrix}
\end{equation}
and for $k=0$, equation~\eqref{eq:recursive} can be expressed in compact form
\begin{equation}\label{eq: msp}
    Y=F(x_0,u_0,\dots, u_{h-1})
\end{equation}
where $Y$ is defined as $Y=\begin{bmatrix}
    y_0&y_1&\dots y_h
\end{bmatrix}^T$. Assuming that the initial conditions are fixed, the compact representation in~\eqref{eq: msp} enables us to cast the dynamic relationship as a static mapping and leverage existing data-driven modeling tools.

A key limitation of this time-domain approach is the curse of dimensionality when the sampling frequency is high. To address this, we transform the available time-domain input--output data into the frequency domain using the Fourier transform. The rationale behind this step is that nonlinear systems driven by periodic inputs generate a discrete set of frequency components in steady state, as established in Volterra series theory. The dataset~\eqref{eq:data} in the frequency domain can be represented as
\begin{equation}
\label{eq: data_freq}
\mathcal{D}_{\omega} = \{(\hat{U}_i(\omega), \hat{Y}_i(\omega))\}_{i=1}^N,
\end{equation}
where $\hat{U}_i(\omega)$ and $\hat{Y}_i(\omega)$ consist of the real and imaginary parts of the complex values at the frequency bins of interest. Specifically, for an excitation centered at frequency $\omega$, the frequency-domain representations of the input and output are constructed as
\begin{equation}
\hat{U}_i(\omega)=\begin{bmatrix}
    \hat{U}_i(0) \\ \Re (\hat{U}_i(\omega)) \\ \Im(\hat{U}_i(\omega))\\ \Re (\hat{U}_i(2\omega)) \\ \Im(\hat{U}_i(2\omega)) \\ \vdots
\end{bmatrix}, \;
\hat{Y}_i(\omega)=\begin{bmatrix}
    \hat{Y}_i(0) \\ \Re (\hat{Y}_i(\omega)) \\ \Im(\hat{Y}_i(\omega))\\ \Re (\hat{Y}_i(2\omega)) \\ \Im(\hat{Y}_i(2\omega)) \\ \vdots
\end{bmatrix}
\end{equation}
Thus, the representation of an unknown system through the dataset~\eqref{eq: data_freq} reduces to learning a mapping such that
\begin{equation}
   \hat{Y}(\omega) = \hat{F}(\hat{U}(\omega))
\end{equation}
where $\hat{F}$ is learned within a nonlinear model class $\mathcal{F}$ by solving
\begin{equation}
\label{eq: modeling}
  \hat{F} \in \operatorname*{arg\,min}_{\hat{F}\in \mathcal{F}} J_r(\hat{F})
\end{equation}
where $J_r(\hat{F})$ denotes the regression loss. 

In prior work \cite{altiner2026datadriven}, the learned map $F$ obtained from~\eqref{eq: modeling} is inverted by solving
\begin{equation}
    \hat{U}^*(\omega) = \operatorname*{arg\,min}_{U} \left\lVert \hat{F}(\hat{U}(\omega)) - \hat{Y}_d(\omega) \right\rVert
\end{equation}
to obtain the feedforward input. However, since $\hat{F}$ is a nonlinear surrogate, this inversion problem is generally nonconvex, and the gap between local and global minimum introduces an inversion error that constitutes an additional source of tracking degradation. To address this, we restrict $\hat{F}$ to the invertible residual network 
(i-ResNet) architecture of~\cite{behrmann2019invertible}, which guarantees an exact inverse via fixed-point iteration and enables us to use statistical methods to bound tracking error with finite data samples.

\subsection{Invertible Neural Networks} \label{sec: 3b}
The i-ResNet architecture introduced by~\cite{behrmann2019invertible} is 
given by
\begin{equation}
\label{eq: INN}
    \hat{Y} = \hat{U} + c\, G_{\theta}(\hat{U})
\end{equation}
where $G_{\theta}$ is a multi-layer feedforward neural network with $\theta$ represents weights and bias to be tuned and $c \in (0,1)$ is a fixed scalar. The map \eqref{eq: INN} defines a surrogate $\hat{F}(\hat{U}) = \hat{U} + 
c\,G_\theta(\hat{U})$. Invertibility of $\hat{F}$ is guaranteed by the contraction condition $c\, L_{G_\theta} < 1$, which is enforced during training by applying spectral normalization to each weight matrix, ensuring $L_{G_\theta} \leq 1$, combined with the fixed choice $c < 1$.

By the Banach fixed-point theorem, a contractive map admits a unique fixed point, and this property is exploited to compute the inverse $\hat{F}^{-1}$ 
via fixed-point iteration, which converges geometrically to the true inverse.

This architectural choice enables feedforward input computation in the frequency domain without solving any nonconvex optimization problem, reducing tracking error certification to a surrogate modeling problem addressable via conformal prediction.

\section{Main Result}
\label{sec: sec4}
In this paper, we exploit this architectural invertibility to compute the feedforward control input. Whereas a generic learned surrogate would require solving a nonconvex optimization problem to invert, the i-ResNet defines its inverse map as part of the architecture itself, reducing inversion to a fixed-point iteration. The following lemma provides the characterization of the tracking error in terms of the Lipschitz constants of the unknown plant and the inverse neural networks.

\begin{lemma}
\label{lem: tracking}
Let $F$ denote the true frequency-domain map, $\hat{F}$ its i-ResNet 
approximation, $\hat{U}_{inn} = \hat{F}^{-1}(\hat{Y}_d)$ the feedforward input obtained 
by inverting $\hat{F}$, and $\hat{U}^*$ the ideal feedforward input satisfying 
$F(U^*) = Y_d$. Then the tracking error satisfies
\begin{equation}
\label{eq: bound}
    \left\lVert \hat{Y} - \hat{Y}_d \right\rVert \leq L_F \, L_{\hat{F}^{-1}} \,
    \left\lVert F(\hat{U}^*) - \hat{F}(\hat{U}^*) \right\rVert
\end{equation}
where $L_F$ is the Lipschitz constant of $F$, $L_{\hat{F}^{-1}}$ is the Lipschitz constant of $\hat{F}^{-1}$, and $\hat{Y}_d$ is the reference signal in the frequency domain.
\end{lemma}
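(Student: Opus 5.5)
The plan is a direct chain of Lipschitz estimates. The pivot is that the ideal input $\hat{U}^*$ and the i-ResNet input $\hat{U}_{inn}$ are preimages of the \emph{same} reference $\hat{Y}_d$ under two different maps: $F(\hat{U}^*) = \hat{Y}_d$ and $\hat{F}(\hat{U}_{inn}) = \hat{Y}_d$.

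Throughout, the realized output is $\hat{Y} = F(\hat{U}_{inn})$. The fixed-point inverse of Section~\ref{sec: 3b} is treated as exact, since its error decays geometrically and can be driven below any tolerance. I also assume the ideal input $\hat{U}^*$ exists, as the statement posits.

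\textbf{Step 1: pass through the true plant.} Write
\begin{equation*}
\lVert \hat{Y} - \hat{Y}_d\rVert = \lVert F(\hat{U}_{inn}) - F(\hat{U}^*)\rVert \le L_F \lVert \hat{U}_{inn} - \hat{U}^*\rVert .
\end{equation*}
This uses only the Lipschitz continuity of $F$. That property is inherited from the Lipschitz maps $f_s, g_s$ through the finite compositions in \eqref{eq:recursive} and the linear Fourier map, at least on the bounded set of periodic signals considered.

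\textbf{Step 2: pass through the surrogate inverse.} Express both inputs as images of $\hat{F}^{-1}$. Since $\hat{F}$ is a bijection (by the contraction condition $c\,L_{G_\theta}<1$), we have $\hat{U}_{inn} = \hat{F}^{-1}(\hat{Y}_d)$ and $\hat{U}^* = \hat{F}^{-1}(\hat{F}(\hat{U}^*))$. Hence
\begin{equation*}
\lVert \hat{U}_{inn} - \hat{U}^*\rVert \le L_{\hat{F}^{-1}} \lVert \hat{Y}_d - \hat{F}(\hat{U}^*)\rVert = L_{\hat{F}^{-1}} \lVert F(\hat{U}^*) - \hat{F}(\hat{U}^*)\rVert ,
\end{equation*}
where the last equality substitutes $\hat{Y}_d = F(\hat{U}^*)$.

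Combining Steps 1 and 2 gives \eqref{eq: bound}. For completeness, $L_{\hat{F}^{-1}}$ is finite: by the reverse triangle inequality,
\begin{equation*}
\lVert \hat{F}(a)-\hat{F}(b)\rVert \ge (1-cL_{G_\theta})\lVert a-b\rVert ,
\end{equation*}
so $L_{\hat{F}^{-1}} \le 1/(1-c L_{G_\theta})$.

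\textbf{Main obstacle.} There is little real difficulty; the only delicate point is the bookkeeping in Step 2. One must evaluate the modeling error at the \emph{ideal} input $\hat{U}^*$ rather than at $\hat{U}_{inn}$. That is precisely why we write $\hat{U}^*$ as $\hat{F}^{-1}$ of $\hat{F}(\hat{U}^*)$ and not $\hat{U}_{inn}$ as $F^{-1}$ of something. This choice avoids needing $F$ to be invertible, and it lets the bound's single modeling-error term be later certified by conformal prediction at one unseen input.
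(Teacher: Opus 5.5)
Your proposal is correct and matches the paper's proof essentially line by line. Lipschitz continuity of $F$ reduces the error to $\lVert \hat{U}_{inn} - \hat{U}^* \rVert$; both inputs are rewritten as images under $\hat{F}^{-1}$; $L_{\hat{F}^{-1}}$ is applied; and $\hat{Y}_d = F(\hat{U}^*)$ is substituted at the end. Your extra remark that $L_{\hat{F}^{-1}} \le 1/(1 - c\,L_{G_\theta})$ for the single-block i-ResNet is a valid addition, not in the paper's proof, showing this constant is finite.
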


\begin{proof}
Using $Y = F(\hat{U}_{inn})$ and $\hat{Y}_d = F(\hat{U}^*)$, and applying the Lipschitz 
continuity of $F$ and $\hat{F}^{-1}$:
\begin{align*}
    \left\lVert \hat{Y} - \hat{Y}_d \right\rVert 
    &= \left\lVert F(\hat{U}_{inn}) - F(\hat{U}^*) \right\rVert \\
    &\leq L_F \left\lVert \hat{U}_{inn} - \hat{U}^* \right\rVert \\
    &= L_F \left\lVert \hat{F}^{-1}(\hat{Y}_d) - \hat{F}^{-1}(\hat{F}(\hat{U}^*)) 
    \right\rVert \\
    &\leq L_F \, L_{\hat{F}^{-1}} \left\lVert \hat{Y}_d - \hat{F}(\hat{U}^*) 
    \right\rVert \\
    &= L_F \, L_{\hat{F}^{-1}} \left\lVert F(\hat{U}^*) - \hat{F}(\hat{U}^*) 
    \right\rVert
\end{align*}
where the last equality uses $\hat{Y}_d = F(\hat{U}^*)$.
\end{proof}
In Lemma~\ref{lem: tracking}, we show that the frequency-domain tracking error can be bounded by the modeling error of the learned surrogate evaluated at the ideal feedforward input $U^*$, scaled by the Lipschitz constants of the true frequency-domain map and the inverse of the learned surrogate. 

In practice, $\hat{U}^*$ is generally not contained in the finite training dataset and can be regarded as an unseen input. Therefore, the ability of the trained surrogate to generalize to unseen frequency-domain inputs is the key factor determining the certified tracking performance. Since $\hat{U}^*$ is unknown, the right-hand side of \eqref{eq: bound} is not directly computable from data. To obtain a computable certificate, we replace the modeling error evaluated at $\hat{U}^*$ with a bound obtained by conformal prediction. Conformal prediction is a distribution-free statistical framework that provides finite-sample guarantees and uncertainty quantification on the predictions of a learned model at unseen inputs, without assumptions on the underlying data distribution \cite{angelopoulos2023conformal}. It has recently attracted attention in the control community as a tool for certifying learning-enabled components in control systems \cite{lindemann2025formal}. 

Before stating the main result and to implement conformal prediction (CP), we partition the dataset $\mathcal{D}_\omega$ into a training set $\mathcal{D}^{train}_{\omega} = \{(\hat{U}_i(\omega), 
\hat{Y}_i(\omega))\}_{i=1}^{N_{train}}$ and a calibration set 
$\mathcal{D}^{cal}_{\omega} = \{(\hat{U}_i(\omega), 
\hat{Y}_i(\omega))\}_{i=1}^{N_{cal}}$, where $\mathcal{D}^{train}_{\omega}$ is used to train the i-ResNet surrogate and $\mathcal{D}^{cal}_{\omega}$ 
is used to compute the nonconformity scores for the conformal predictor. This specific type of CP is called Split CP. We define the nonconformity score as
\begin{equation}
\label{eq: nonconformity}
    \mu_{i} = \left\lVert \hat{Y}_i - \hat{F}(\hat{U}_i) \right\rVert, 
    \quad i = 1, \dots, N_{cal}
\end{equation}
where the scores are computed on $\mathcal{D}^{cal}_\omega$. With these definitions, we state the following assumptions, which are required for the main result.
\begin{assumption}
The dataset $\mathcal{D}_\omega$ consists of independent and identically distributed samples from an underlying distribution, and the test pair $(\hat{U}^*, \hat{Y}_d)$ is an independent draw from the same distribution.
\end{assumption}

Although Assumption~1 is the main prerequisite for CP, it is not restrictive in practice. The reference trajectory $Y_d$ lies 
within the support of the output distribution provided that the dataset is constructed from periodic signals of the same class as the reference, for example, sums of sinusoids. Moreover, the ideal feedforward control input $U^*$ lies within the support of the input distribution as well, provided that data collection is carried out under the same operating conditions as deployment. In other words, Assumption~1 requires that the reference trajectory and the corresponding ideal feedforward input are not extrapolated beyond the support of the training distribution. Thus, given a failure probability $\delta \in (0,1)$, we have the following result.

\begin{theorem}
Under Assumption~1 and Lemma~\ref{lem: tracking}, the feedforward tracking 
error in the frequency domain satisfies
\begin{equation}
\label{eq: bound_CP}
    \mathrm{Prob}\left( \left\lVert \hat{Y} - \hat{Y}_d \right\rVert \leq 
    L_F \, L_{\hat{F}^{-1}} \, q_{1-\delta} \right) \geq 1-\delta
\end{equation}
where $q_{1-\delta} = \mu_{(\lceil (N_{test}+1)(1-\delta) \rceil)}$ is the 
$\lceil (N_{test}+1)(1-\delta) \rceil$-th quantile of the 
nonconformity scores $\{\mu_i\}_{i=1}^{N_{test}}$.
\end{theorem}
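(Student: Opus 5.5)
The plan is to combine the deterministic bound of Lemma~\ref{lem: tracking} with the standard split conformal prediction guarantee. Lemma~\ref{lem: tracking} already gives, pointwise,
\begin{equation*}
\left\lVert \hat{Y} - \hat{Y}_d \right\rVert \leq L_F\, L_{\hat{F}^{-1}}\, \mu^*, \qquad \mu^* := \left\lVert \hat{Y}_d - \hat{F}(\hat{U}^*) \right\rVert ,
\end{equation*}
where we use $\hat{Y}_d = F(\hat{U}^*)$. Note that $\mu^*$ is exactly the nonconformity score \eqref{eq: nonconformity} evaluated at the test pair $(\hat{U}^*, \hat{Y}_d)$. So it suffices to show $\mathrm{Prob}(\mu^* \leq q_{1-\delta}) \geq 1-\delta$. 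Event inclusion then gives the claim: $\{\mu^* \le q_{1-\delta}\}$ is contained in the event in \eqref{eq: bound_CP}, since $L_F L_{\hat{F}^{-1}} \ge 0$ is a fixed constant.

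The key step is the coverage guarantee. Condition on the training set $\mathcal{D}^{train}_\omega$. The surrogate $\hat{F}$ is then a fixed deterministic map, because it is trained only on data disjoint from the calibration set. By Assumption~1, the calibration pairs and the test pair are i.i.d., hence exchangeable. Applying the same fixed function $(U,Y)\mapsto \lVert Y - \hat{F}(U)\rVert$ preserves exchangeability, so the scores $\mu_1,\dots,\mu_{n},\mu^*$ are exchangeable; here $n$ is the calibration size, which the statement calls $N_{test}$.

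The standard rank argument then applies. The rank of $\mu^*$ among these $n+1$ scores is uniform on $\{1,\dots,n+1\}$ when ties are broken at random; with ties the inequality only improves. Hence $\mu^* \le \mu_{(k)}$, with $k = \lceil (n+1)(1-\delta)\rceil$, holds with probability at least $k/(n+1) \geq 1-\delta$. I would write this out: $\mu^* > \mu_{(k)}$ requires $\mu^*$ to lie among the top $n+1-k$ ranks, an event of probability at most $(n+1-k)/(n+1) \le \delta$. If $k > n$, I set $q_{1-\delta} = +\infty$ and the statement is trivial. Finally, taking expectation over the training set removes the conditioning, so the bound holds marginally.

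The main obstacle is not the computation but a modeling subtlety. $\hat{U}_{inn}$ and $\hat{Y}$ are deterministic given $\hat{F}$ and $\hat{Y}_d$, while $L_{\hat{F}^{-1}}$ depends on the trained network. I must therefore make sure all constants are measurable with respect to the training data, so that conditioning on the training data is legitimate. The architecture makes this clean: $L_{\hat{F}^{-1}} \le 1/(1-c)$ is a deterministic bound. I also need the Lipschitz inequalities of Lemma~\ref{lem: tracking} to hold surely, not just in probability, so that the event inclusion is valid. I would also state explicitly that the guarantee is marginal over calibration and test draws, not conditional on a particular $\hat{U}^*$.
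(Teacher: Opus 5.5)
Your proposal is correct and follows the same route as the paper: apply the deterministic bound of Lemma~\ref{lem: tracking}, use the split conformal coverage guarantee for the score at the test pair $(\hat{U}^*, \hat{Y}_d)$, and conclude by event inclusion. The only difference is that you prove the coverage step yourself, via exchangeability of the scores conditional on the training set and the rank argument, whereas the paper simply cites Theorem D.1 of \cite{angelopoulos2023conformal}.
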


\begin{proof}
From the result in Lemma 1 and under Assumption 1, the prediction of $\left\lVert F(\hat{U}^*) - \hat{F}(\hat{U}^*) 
    \right\rVert$ is guaranteed at most $q_{1-\delta} = \mu_{(\lceil (N_{test}+1)(1-\delta) \rceil)}$ with the probability of $1-\delta$ by Theorem D.1 in \cite{angelopoulos2023conformal}. Substituting this into the bound in Lemma 1 yields the bound in \eqref{eq: bound_CP}.  
\end{proof}

The physical interpretation of Theorem 1 is as follows. For a system operating within a fixed regime, any reference signal whose corresponding feedforward input is drawn i.i.d. from the same distribution as the calibration data, that is, drawn from the same operating distribution, satisfies the certified tracking bound with probability at least $1-\delta$. Moreover, $q_{1-\delta}$ serves as a probabilistic upper bound on the modeling error at the unseen point $U^*$: with probability at least $1-\delta$, the surrogate prediction error $\left\lVert F(\hat{U}^*) - \hat{F}(\hat{U}^*) \right\rVert$ does not exceed $q_{1-\delta}$. 

Although $q_{1-\delta}$ can be computed directly from the nonconformity scores, the Lipschitz constants $L_F$ and $L_{\hat{F}^{-1}}$ must also be determined to evaluate the bound in \eqref{eq: bound}. For the inverse surrogate, the global Lipschitz constant introduces excessive conservatism. To mitigate this, we propose using a local Lipschitz constant is estimated at the $q_{1-\delta}$-neighbourhood of the reference trajectory. To estimate the Lipschitz constant $L_F$ of the unknown true map, the estimation problem can be formulated as an uncertain optimization problem, and the available data can be utilized to solve it as a scenario convex program \cite{campi2008exact}. This choice introduces another level of failure rate with a confidence level, which must be reflected in the probabilistic bound in \eqref{eq: bound_CP}. In this paper, we omit this analysis due to page restrictions; therefore, it is assumed to be known.

\section{Numerical Results}
\label{sec: sec5}

Consider the following second-order mechanical system
\begin{align}
    \label{eq: DC1}
    \dot{x}_1 &= x_2\\
    \dot{x}_2 &= a_1u - a_1 F_{fric}(x_2) - a_2 x_1\\
    y &= x_1
    \label{eq: DC2}
\end{align}
where $x_1$ and $x_2$ denote position and velocity, respectively, and $u$ represents the control input. The term $F_{fric}(x_2)$ models smooth nonlinear friction and is given by
\begin{align}
\label{eq: friction}
    F_{fric}(x_2) &= \alpha_1\tanh (\beta_1x_2) \\ &+ \alpha_2 \big[\tanh(\beta_2 x_2)-\tanh(\beta_3 x_2)\big] + \alpha_3 x_2\notag
\end{align}
where the term $\alpha_1\tanh (\beta_1x_2)$ represents the Coulomb friction, the second term captures the Stribeck effect and $\alpha_3 x_2$ is the viscous friction \cite{yao2015output}. The parameters are given in Table \ref{tab: Mechparam}.
\begin{table}
    \centering
    \begin{tabular}{|c|c|c|}
    \hline
    Parameters & Value & Units \\ 
    \hline
         $a_1$&  400 & $rad/(V\cdot s^{-2})$ \\
         $a_2$&  355 & $s^{-2}$ \\
         $\alpha_1$&  0.04 & $V$\\
         $\alpha_2$&  0.01 & $V$\\
         $\alpha_3$& 0.05 & $V\cdot s/rad$\\
         $\beta_1$& 15 & $s/rad$\\
         $\beta_2$& 15 & $s/rad$\\
         $\beta_3$& 1.5 & $s/rad$\\
         \hline
    \end{tabular}
    \caption{Parameters used in the mechanical system and the friction term}
    \label{tab: Mechparam}
\end{table}

The surrogate follows the i-ResNet architecture of Section~\ref{sec: 3b}, with six cascaded blocks of 128-neuron single-layer networks and exponential linear unit activations. The training data are generated by simulating the model \eqref{eq: DC1} - \eqref{eq: DC2} under input signals composed of three sinusoidal components selected from $\{0, 1, 2, 3, 4, 5\}$ Hz, where 0 Hz denotes the DC term. Parameterizing each input by three components reflects a key property of the
framework: the harmonic content required for accurate tracking is not assumed known a priori, so the data cover all combinations of active components and the inverse model is expected to select the appropriate combination for a given reference. For each combination, amplitudes are sampled from $(0,\; 0.10)$ for the DC term and $(0,\; 0.35)$ for each active sinusoidal component, with phases sampled from $(0,\; 2\pi)$. A total of 1.28 million data points is collected. The total data set is partitioned into three subsets as $70\%$ for training data, $15\%$ for validation used for early stopping, and $15\%$ for the calibration data set which is held out from both training and validation and used in the computation of the nonconformity scores in \eqref{eq: nonconformity}. The input-output dimension of the neural network is 11, comprising the DC component and the real and imaginary parts at 1, 2, 3, 4, and 5 Hz. Training is performed in PyTorch using the Adam optimizer with 1000 epochs at $10^{-3}$ learning rate. The trained model achieves a RMSE of $0.09$ deg on the train and the test data set. 

To test the tracking performance, we compute the inverse through the architecture via the Fixed-point algorithm with 200 iterations, the reference signal $y_d=15\sin(2\pi t)$ deg. The nonlinearities of the system involve the $\tanh$ function what generates an infinite number of odd harmonics. Since the surrogate is defined over frequency components up to 5 Hz, the feedforward input is truncated accordingly. Figure \ref{fig: time_resp_mech} - \ref{fig: control_input} shows the time-domain tracking performance, the tracking error in frequency domain and the feedforward control input in the frequency domain. The time domain response clearly shows that the feedforward input obtain from the inversion of the surrogate model achieves tracking successfully. To assess the success of the control, we report that the maximum tracking error is 0.495 deg, while the analytically computed error from the truncated signal is 0.462 deg. The frequency domain feedforward input clearly shows that the feedforward input includes harmonics at 3 and 5 Hz which are required to be included in the feedforward control to compensate the nonlinearities of the system.  

\begin{figure}
    \centering
    \includegraphics[width=0.9\linewidth]{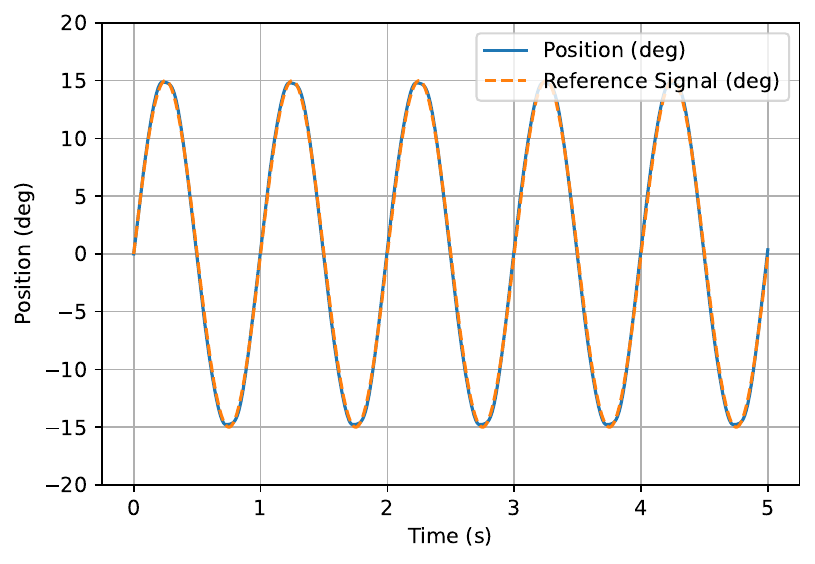}
    \caption{Reference and system response using the data-driven control input in time domain}
    \label{fig: time_resp_mech}
\end{figure}

\begin{figure}
    \centering
    \includegraphics[width=0.9\linewidth]{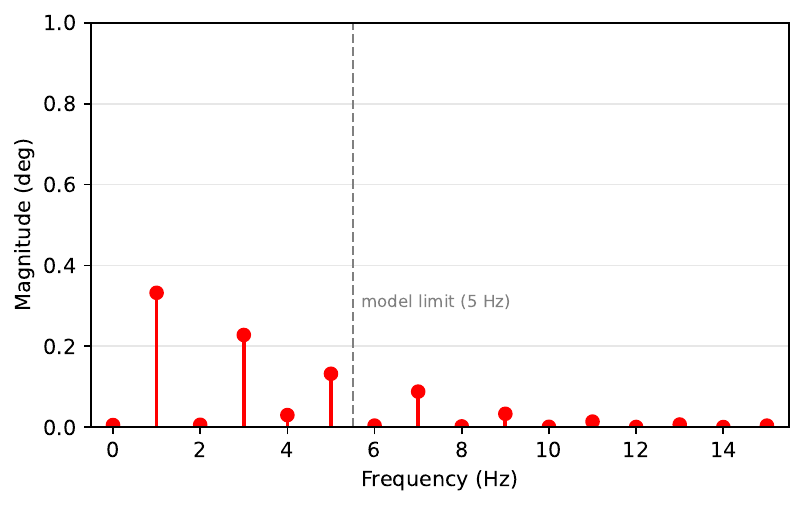}
    \caption{Tracking error in frequency domain}
    \label{fig: error_fft}
\end{figure}

\begin{figure}
    \centering
    \includegraphics[width=0.9\linewidth]{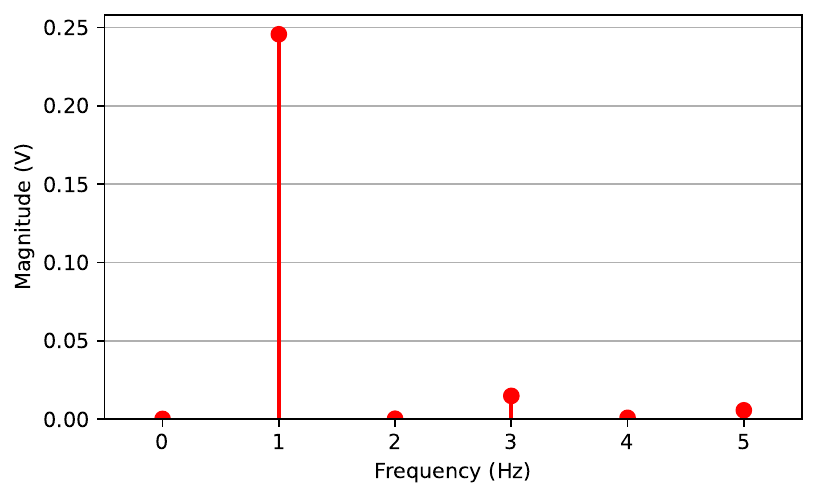}
    \caption{Feedforward control input in frequency domain}
    \label{fig: control_input}
\end{figure}

We implement Theorem 1 to show the validity of the bound \eqref{eq: bound_CP}. As stated in Theorem 1, we need to calculate the quantile $q_{1-\delta}$, the Lipschitz constant of the true map and the inverse surrogate model's local Lipschitz constant for the given reference. The quantile can be easily obtained by sorting the nonconformity scores \eqref{eq: nonconformity} and taking the
$\lceil(N_{test}+1)(1-\delta)\rceil$-th order statistic. For a failure rate $\delta=0.1$, Figure \ref{fig: nonconformity} demonstrates the distribution of the nonconformity scores. The red line in Figure \ref{fig: nonconformity} marks the quantile $90\%$ ($q_{1-\delta}$) of the nonconformity scores and its value is 0.446 deg. It means that the modeling error for a new data sample from the same distribution with the test data set is at most 0.446 deg with the probability of $90\%$. The local Lipschitz constant of the true map $L_F$ and the inverse surrogate are estimated as 1.52 and 3.28, respectively. Since the certified bound in \eqref{eq: bound_CP} is expressed as a
frequency-domain norm, it is mapped to a peak time-domain error through the relation $\max_t |e(t)| \le \sqrt{6}\,\lVert e \rVert_2$, following from the Cauchy-Schwarz inequality over the six retained frequency components. The tracking error for the reference signal $y_d=15\sin(2\pi t)$ deg is demonstrated in Figure \ref{fig: time_error} and it certifies that the bound \eqref{eq: bound_CP} holds. Figure~\ref{fig: boundvsref} shows that the bound \eqref{eq: bound_CP} holds across reference signals of different amplitudes, empirically supporting the theoretical analysis. The bound is looser at lower amplitudes because the velocity remains in the low-speed regime, where friction dominates and the forward map is least sensitive to the input. The inverse Lipschitz constant $L_{\hat{F}^{-1}}$, which we observe to decrease monotonically with amplitude over the tested range, amplifies the certified modeling error; the bound is therefore more conservative precisely where the nonlinearity is strongest.

\begin{figure}
    \centering
    \includegraphics[width=0.9\linewidth]{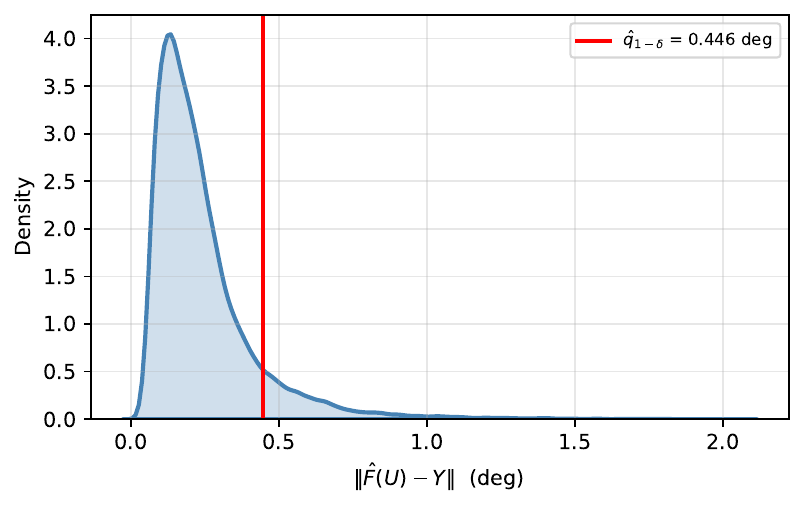}
    \caption{The histogram of the nonconformity scores}
    \label{fig: nonconformity}
\end{figure}

\begin{figure}
    \centering
    \includegraphics[width=0.9\linewidth]{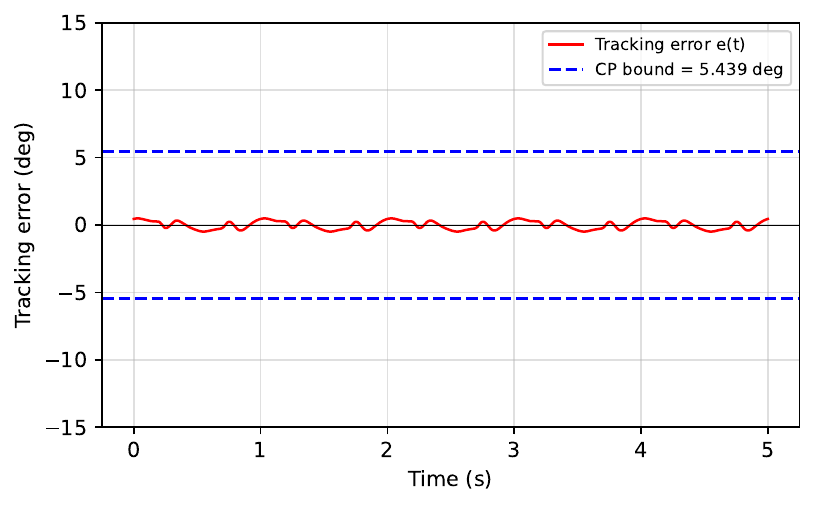}
    \caption{Tracking error and CP bounds in time domain}
    \label{fig: time_error}
\end{figure}

\begin{figure}
    \centering
    \includegraphics[width=0.9\linewidth]{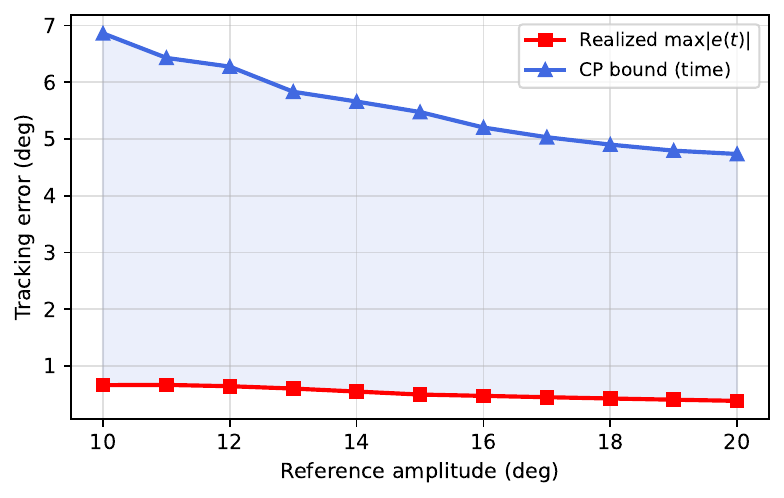}
    \caption{The bound vs. reference amplitude}
    \label{fig: boundvsref}
\end{figure}



\section{Conclusion}
\label{sec: sec6}
In this paper, we investigated the certification of data-driven model inversion for feedforward control. To address the difficulty of characterizing the error arising from nonconvex inversion of a learned surrogate, we adopted i-ResNets as the model architecture, exploiting
architectural invertibility to compute the feedforward input in the frequency domain without solving a nonconvex optimization problem. This choice reduces tracking error certification to a surrogate modeling
problem. Combining Lipschitz continuity of the true plant and the inverse surrogate with conformal prediction, we derived probabilistic guarantees on the feedforward tracking error, validated on a DC-motor-driven mechanical load with nonlinear friction.

\section*{Acknowledgements}
This work was funded by DEVCOM Army Research Laboratory as a part of the VICTOR ERP under Cooperative Agreement no. W911NF-20-2-0161 with Dr. Mike Kweon as the Program Manager and Dr. Jacob Temme as the Deputy Program Manager. The views and conclusions contained in this document are those of the authors and should not be interpreted as representing the official policies, either expressed or implied, of DEVCOM Army Research Laboratory or the U.S. Government. The U.S. Government is authorized to reproduce and distribute reprints for Government purposes notwithstanding any copyright notation herein.

\ifCLASSOPTIONcaptionsoff
  \newpage
\fi



%
\bibliographystyle{IEEEtran}
\bibliography{refs_inn}

@inproceedings{behrmann2019invertible,
  title={Invertible residual networks},
  author={Behrmann, Jens and Grathwohl, Will and Chen, Ricky TQ and Duvenaud, David and Jacobsen, J{\"o}rn-Henrik},
  booktitle={International conference on machine learning},
  pages={573--582},
  year={2019},
  organization={PMLR}
}

@article{banerjee2025data,
  title={Data-driven modeling and control framework under partial state measurements with experimental validation on multi-fuel engines},
  author={Banerjee, Arunava and Sarkar, Rajasree and Altiner, Ihsan Berk and Govind Raju, Sathya Aswath and Sun, Zongxuan and Kim, Kenneth and Kweon, Chol-Bum Mike},
  journal={Proceedings of the Institution of Mechanical Engineers, Part I: Journal of Systems and Control Engineering},
  pages={09596518251399937},
  year={2025},
  publisher={SAGE Publications Sage UK: London, England}
}

@inproceedings{bolderman2023data,
  title={Data-driven feedforward control design for nonlinear systems: A control-oriented system identification approach},
  author={Bolderman, Max and Lazar, Mircea and Butler, Hans},
  booktitle={2023 62nd IEEE Conference on Decision and Control (CDC)},
  pages={4530--4535},
  year={2023},
  organization={IEEE}
}

@article{lindemann2025formal,
  title={Formal verification and control with conformal prediction: Practical safety guarantees for autonomous systems},
  author={Lindemann, Lars and Zhao, Yiqi and Yu, Xinyi and Pappas, George J and Deshmukh, Jyotirmoy V},
  journal={IEEE Control Systems},
  volume={45},
  number={6},
  pages={72--122},
  year={2025},
  publisher={IEEE}
}

@article{angelopoulos2023conformal,
  title={Conformal prediction: A gentle introduction},
  author={Angelopoulos, Anastasios N and Bates, Stephen},
  journal={Foundations and Trends in Machine Learning},
  volume={16},
  number={4},
  pages={494--591},
  year={2023},
  publisher={Emerald Publishing Limited}
}

@article{campi2008exact,
  title={The exact feasibility of randomized solutions of uncertain convex programs},
  author={Campi, Marco C and Garatti, Simone},
  journal={SIAM Journal on Optimization},
  volume={19},
  number={3},
  pages={1211--1230},
  year={2008},
  publisher={SIAM}
}

@article{altiner2026datadriven,
  author  = {Altiner, Berk and Banerjee, Arunava and Sarkar, Rajasree and Sun, Zongxuan and Kim, Kenneth and Kweon, Chol-Bum Mike},
  title   = {Data-driven Trajectory Tracking under Partial State Measurements: A Frequency Domain Approach},
  journal = {International Journal of Robust and Nonlinear Control},
  year    = {2026},
  note    = {Submitted for publication},
}

@inproceedings{bolderman2021physics,
  title={Physics--guided neural networks for inversion--based feedforward control applied to linear motors},
  author={Bolderman, Max and Lazar, Mircea and Butler, Hans},
  booktitle={2021 IEEE Conference on Control Technology and Applications (CCTA)},
  pages={1115--1120},
  year={2021},
  organization={IEEE}
}

@inproceedings{fan2023physics,
  title={Physics--guided neural networks for inversion--based feedforward control applied to hybrid stepper motors},
  author={Fan, Daiwei and Bolderman, Max and Koekebakker, Sjirk and Butler, Hans and Lazar, Mircea},
  booktitle={2023 IEEE Conference on Control Technology and Applications (CCTA)},
  pages={1153--1158},
  year={2023},
  organization={IEEE}
}

@inproceedings{zhang2007inverse,
  title={Inverse model identification of nonlinear dynamic system using neural network},
  author={Zhang, Ming-Guang},
  booktitle={2007 International Conference on Machine Learning and Cybernetics},
  volume={5},
  pages={2451--2455},
  year={2007},
  organization={IEEE}
}

@inproceedings{wanigasekara2019neural,
  title={Neural network based inverse system identification from small data sets},
  author={Wanigasekara, Chathura and Swain, Akshya and Nguang, Sing Kiong and Prusty, B Gangadhara},
  booktitle={2019 International Joint Conference on Neural Networks (IJCNN)},
  pages={1--6},
  year={2019},
  organization={IEEE}
}

@article{aarnoudse2024control,
  title={Control-relevant neural networks for feedforward control with preview: Applied to an industrial flatbed printer},
  author={Aarnoudse, Leontine and Kon, Johan and Ohnishi, Wataru and Poot, Maurice and Tacx, Paul and Strijbosch, Nard and Oomen, Tom},
  journal={IFAC Journal of Systems and Control},
  volume={27},
  pages={100241},
  year={2024},
  publisher={Elsevier}
}

@inproceedings{bolderman2022physics,
  title={Physics--guided neural networks for feedforward control: From consistent identification to feedforward controller design},
  author={Bolderman, Max and Lazar, Mircea and Butler, Hans},
  booktitle={2022 IEEE 61st Conference on Decision and Control (CDC)},
  pages={1497--1498},
  year={2022},
  organization={IEEE}
}

@inproceedings{aarnoudse2021control,
  title={Control-relevant neural networks for intelligent motion feedforward},
  author={Aarnoudse, Leontine and Ohnishi, Wataru and Poot, Maurice and Tacx, Paul and Strijbosch, Nard and Oomen, Tom},
  booktitle={2021 IEEE International Conference on Mechatronics (ICM)},
  pages={1--6},
  year={2021},
  organization={IEEE}
}

@article{devasia2002should,
  title={Should model-based inverse inputs be used as feedforward under plant uncertainty?},
  author={Devasia, Santosh},
  journal={IEEE Transactions on automatic control},
  volume={47},
  number={11},
  pages={1865--1871},
  year={2002},
  publisher={IEEE}
}

@inproceedings{boerlage2003model,
  title={Model-based feedforward for motion systems},
  author={Boerlage, Matthijs and Steinbuch, Maarten and Lambrechts, Paul and Van De Wal, Marc},
  booktitle={Proceedings of 2003 IEEE Conference on Control Applications, 2003. CCA 2003.},
  volume={2},
  pages={1158--1163},
  year={2003},
  organization={IEEE}
}

@article{jamaludin2009friction,
  title={Friction compensation of an $ XY $ feed table using friction-model-based feedforward and an inverse-model-based disturbance observer},
  author={Jamaludin, Zamberi and Van Brussel, Hendrik and Swevers, Jan},
  journal={IEEE Transactions on Industrial Electronics},
  volume={56},
  number={10},
  pages={3848--3853},
  year={2009},
  publisher={IEEE}
}

@article{lin2026physics,
  title={Physics-guided gated recurrent units for inversion-based feedforward control},
  author={Lin, Mingdao and Bolderman, Max and Lazar, Mircea},
  journal={IEEE Transactions on Control Systems Technology},
  year={2026},
  publisher={IEEE}
}

@article{xu2024robust,
  title={Robust inversion-based feedforward control with hybrid modeling for feed drives},
  author={Xu, Haijia and Hinze, Christoph and Iannelli, Andrea and Verl, Alexander},
  journal={IEEE Transactions on Control Systems Technology},
  volume={33},
  number={3},
  pages={858--871},
  year={2024},
  publisher={IEEE}
}

@inproceedings{kon2022physics,
  title={Physics-guided neural networks for feedforward control: An orthogonal projection-based approach},
  author={Kon, Johan and Bruijnen, Dennis and Van De Wijdeven, Jeroen and Heertjes, Marcel and Oomen, Tom},
  booktitle={2022 American Control Conference (ACC)},
  pages={4377--4382},
  year={2022},
  organization={IEEE}
}

@article{yao2015output,
  title={Output feedback robust control of direct current motors with nonlinear friction compensation and disturbance rejection},
  author={Yao, Jianyong and Jiao, Zongxia and Ma, Dawei},
  journal={Journal of Dynamic Systems, Measurement, and Control},
  volume={137},
  number={4},
  pages={041004},
  year={2015},
  publisher={American Society of Mechanical Engineers}
}

\end{document}